\newtheorem{theorem}{Theorem}
\newtheorem{lemma}{Lemma}
\newtheorem*{claim}{Claim}
\newif\ifIsSubmission
\begin{document}

\title{Randen - fast backtracking-resistant random generator with AES+Feistel+Reverie}
\author{\href{mailto:janwas@google.com}{J.~Wassenberg}}
\author{R.~Obryk}
\author{J.~Alakuijala}
\author{E.~Mogenet}
\affil{Google Research}
\maketitle

\begin{abstract}
\noindent
Algorithms that rely on a pseudorandom number generator often lose their
performance guarantees when adversaries can predict the behavior of the
generator. To protect non-cryptographic applications against such attacks, we
propose `strong' pseudorandom generators characterized by two properties:
computationally indistinguishable from random and backtracking-resistant. Some
existing cryptographically secure generators also meet these criteria, but
they are too slow to be accepted for general-purpose use. We introduce a new
open-sourced generator called `Randen' and show that it is `strong' in addition
to outperforming Mersenne Twister, PCG, ChaCha8, ISAAC and Philox in real-world
benchmarks.
This is made possible by hardware acceleration. Randen is an instantiation of
Reverie, a recently published robust sponge-like random generator, with a new
permutation built from an improved generalized Feistel structure with 16
branches. We provide new bounds on active s-boxes for up to 24 rounds of this
construction, made possible by a memory-efficient search algorithm. Replacing
existing generators with Randen can protect randomized algorithms such as
reservoir sampling from attack. The permutation may also be useful for
wide-block ciphers and hashing functions.

\end{abstract}

\section{Introduction}

Pseudorandom number generators are very widely used. For example, searching
Github for C++ code containing mt19937 (Mersenne Twister) returns 220,000
hits. Some of these usages will be vulnerable to unexpected correlations
\cite{rngCorrelation} or exploitation by attackers \cite{weakRngDefinition}.
To avoid having to audit each call site, we propose to replace most of them
with our new fast and `strong' generator.

\subsection{Definition of strong}

In this paper, we choose to characterize a strong deterministic random generator
by two properties:
\begin{enumerate}

\item Even relatively powerful adversaries able to generate and store up to
$2^{64}$ random outputs cannot distinguish the output from random unless they
know the current state. This property is useful even for non-cryptographic
applications: it implies empirical randomness, which reduces the likelihood of
flaws such as correlations that might affect simulations \cite{rngCorrelation}.
This property also ensures adversaries cannot predict future outputs, which
makes it harder for them to trigger worst cases in randomized algorithms.

\item Past outputs cannot be reconstructed even after the state is
compromised. This is known as enhanced backward secrecy
\cite{bsiRequirements}, forward security \cite{dodisRequirements} and
backtracking resistance \cite{nistRequirements}. We use the latter name
because it is more clear. This may not be necessary for simulation
applications, but it prevents adversaries from discovering past behavior,
e.g. which inputs were sampled.
\end{enumerate}

\noindent
The notion of `robustness' from the literature also requires generators to
recover security after a state compromise \cite{dodisRequirements}. This is
typically achieved by periodically reseeding from an entropy source. However,
our applications require at least the option of deterministic results for
reproducibility and debugging. Our definition of `strong' describes the
achievable security in this model.

\subsection{Existing generators}

\begin{description}
\item[RC4] was a popular stream cipher designed in 1987, but attacks with
practical complexity have recently been published \cite{rc4Weak}.

\item[ISAAC] was published in 1996 and has some similarities with RC4.
It has many weak initial states, though the resulting biases can be avoided with
a modified algorithm \cite{isaacWeak}. Despite the large 1024 byte state, ISAAC
is relatively fast and widely used \cite{kneuselRNG}[p.~200].

\item[ChaCha20] is an ARX-based hash/stream cipher used by OpenBSD
\texttt{arc4random} and Linux 4.8 \verb|/dev/urandom|. It is an order of
magnitude slower than some general-purpose generators \cite{pcg}[p.~41];
a similar result is observed in our benchmark. Note that a ChaCha20 generator
reportedly fails one part \cite{kneuselRNG}[p.~205] of the dieharder test of
empirical randomness.

\item[Tyche-i] is based on ChaCha and reaches 1.5 cycles per byte
\cite{tycheChacha}. However, its authors discovered some short cycles and
recommended a workaround that is seven times slower \cite{tycheAddon}.

\item[Mersenne Twister] is a popular general-purpose generator included in
the C++ standard library. It is fast but not strong: the generated numbers are
an easily inverted bijection (`tempering function') of a portion of the state,
so adversaries learn the entire state after generating one full buffer.

\item[xorshift128+ and xoroshiro128+] \cite{xoroshiro} fail a PractRand test
due to lack of randomness in the lower bit \cite{practrandXORO}; the
latter is also easily distinguishable from random \cite{xoroCracking}.

\item[Philox] is a noncryptographic counter-based generator with an iterated
bijection based on a Feistel network using double-width multiplications
\cite{salmonRNG}. It passes the TestU01 suite by construction \cite{testu01}.
GPU implementations achieve very high throughput \cite{salmonRNG}, but our
benchmarks indicate our hardware AES-based permutation is twice as fast on CPUs.

\item[PCG] includes an extension that XORs the output of a 128-bit generator
with one of 32 table entries \cite{pcg}. Periodically scrambling the table using
entropy from the state might be sufficient for backtracking resistance, which
the previously mentioned generators lack. However, PCG makes no concrete
security claims \cite{pcgCode}. Although its statistical quality appears
good, we are unaware of any existing proofs of indistinguishability and
backtracking resistance, so PCG is not known to be strong.

\item[AES-CTR] is a block-cipher mode that can be used as a generator by
enciphering all-zero plaintexts \cite{salmonRNG}. Although indistinguishable
from random, this lacks backtracking resistance. Once the attacker knows the
current counter value and key, they can reconstruct all prior outputs.

\item[AES-CTR-DRBG] is a strong generator specified in
NIST~800-90A~\cite{nistRequirements}. Similarly to Fortuna~\cite{fortuna},
it periodically re-keys based on the current output. However, this is about five
times slower than AES-CTR, and too slow for general purpose use (see
Section~\ref{sec:speed}). Note that relaxing the re-keying requirements, e.g.\
only after every 100 blocks, greatly reduces the overhead and could yield a
faster generator. However, applications may not be willing to accept exposing
several thousand prior outputs when the state leaks.

\item[Fast-key-erasure RNGs] are a more efficient alternative to CTR-DRBG
without the potential for unsafe usage \cite{fastErasure}. Bernstein
reiterates the importance of backtracking resistance and proposes to generate
a buffer of random bits using a stream cipher, immediately overwriting its key
with part of the buffer, and returning the rest. However, there are two
integration issues with this approach.  If the stream cipher relies on AVX2 or
AVX-512 SIMD for speed \cite{vectorChacha}, there is a risk of slowing down
the entire application. Frequency throttling has been identified as a cause
\cite{avxThrottle}, but this only applies to ChaCha20-Poly1305. Salsa/ChaCha
are unaffected because they only require low-power operations, whereas the
multiplications in Poly1305 trigger throttling. Instead, we are concerned
about another AVX2 implementation detail: a warmup period of about 60,000
cycles triggered by the
first AVX2 instruction within a 675~\textmu s window. During this time, SIMD
instructions are considerably slower; Haswell CPUs can even stall for
10~\textmu s due to their internal voltage regulator. Thus, sporadic use of
stream ciphers relying on AVX2/AVX-512 can slow down the entire application, or
even unrelated jobs running on the same socket.  The second integration issue
is buffer size. Stream ciphers are considerably slower for small buffers
\cite{ebacsStream}, which are preferred by applications and library writers
because generators may be short-lived or only used to produce a few numbers.
Our proposed approach avoids both issues. First, 128-bit AES hardware runs at
full frequency without warmup, and is performance-portable to other 128-bit
SIMD architectures -- see our measurements in Section~\ref{sec:speed}. Second,
our Feistel permutation does not require a buffer larger than its 256-byte
size.

\end{description}

\noindent
We are unaware of any existing generator that is both strong and fast
in real-world applications.

\subsection{Intended applications}

We argue that the default choice of random generators should be `strong'. This
makes it harder to attack randomized algorithms and trigger skewed samples or
worst-case performance. Security-critical applications such as generating
cryptographic keys should continue to use well-studied and trusted
cryptographic generators such as Fortuna \cite{fortuna}. However, these are
too slow to be accepted for general use. For example, we have tens of thousands
of high-end CPU cores occupied by general-purpose random generators. Thus, the
speed of our proposed generator is important. Because Mersenne Twister is
commonly used in C++ applications (see Github usage above), we assume its
level of performance is generally acceptable. The Randen generator is designed
to reach similar performance.

Note that applications that require many random numbers without any concern
for security (such as Monte-Carlo simulations) may still prefer a faster but
weaker generator such as \texttt{pcg32} \cite{pcgCode}. For other applications,
we suggest using Randen because it is strong and tends to outperform Mersenne
Twister (see Section~\ref{sec:speed}).

\subsection{Contributions}

This paper makes four contributions:
\begin{itemize}
\item Introducing Randen\footnote{RANDen = RANDom number generator, or
beetroots in Swiss German.}, a new generator based on Reverie \cite{reverie}
instantiated with a generalized Feistel structure \cite{improvedFeistel}
(Section~\ref{sec:spec}).

\item Arguing that Randen is `strong', and explaining why this is important
even for non-cryptographic applications (Section~\ref{sec:security}).

\item Showing that existing secure generators are too slow for general
purpose use (Section~\ref{sec:speed}). By contrast, Randen outperforms
Mersenne Twister in some real-world use cases despite providing a higher level
of security. To the best of our knowledge, Randen is the fastest `strong'
software generator.

\item Proposing an efficient algorithm for lower-bounding active s-boxes in
16-branch generalized Feistel networks with SPSP-type round functions
(Appendix~\ref{sec:search}). We provide results for up to 24~rounds, whereas
prior work reaches 18 rounds \cite{feistelSboxes}.
\end{itemize}

\section*{Absence of backdoors}

We, the designers of Randen, faithfully declare that we have not inserted
any weaknesses in this algorithm/implementation, nor have we discovered any
weakness not described in this paper.

\ifIsSubmission\else
\section*{Acknowledgment}

Thanks to Jeffrey Lim, Titus Winters, Chandler Carruth and Daniel Lemire for
suggestions and technical help on improving the benchmarks. We also
appreciate the many clear and accessible posts on practical random
number generation topics by Melissa O'Neill (author of PCG).
\fi

\section{Specification}
\label{sec:spec}

Randen is an instantiation of Reverie, a sponge-like construction that
scrambles its internal state using a permutation \cite{reverie}. To avoid the
ambiguities of pseudocode, we describe its parts using standard C++11,
plus explanatory text.
The permutation operates on 128-bit pieces of the state called `branches'.
This corresponds to the block size of AES. For convenience, we assume the
availability of a platform-specific 128-bit SIMD vector type \texttt{V} with
associated \texttt{Load}, \texttt{Store} and \texttt{AES} functions.

\subsection{Initialization}

Randen operates on a 2048-bit state, of which the first 128 bits are the
inaccessible `inner' portion corresponding to the `capacity' of a sponge. The
remaining `outer' bits are the generated random bits. To simplify
initialization of the state, we partition it into 32 64-bit integers, two per
128-bit branch. Zero-initializing the state yields a valid generator, but
applications will typically set some of its outer bits to arbitrary
user-specified `seed' values. Providing more than 128 seed bits may help
against multi-user attacks involving precomputation. We suggest a 256-bit seed,
specified as four 64-bit \texttt{seed} integers. For more thorough diffusion,
the seeds should be placed into `even-numbered` (according to zero-based
index) branches of the state, e.g.\ the third (with zero-based indices 4, 5 in
the array of 64-bit integers) and fifth.

{\small
\begin{lstlisting}[numbers=none]
  uint64_t state[32];
  memset(state, 0, sizeof(state));
  state[4] = seed0;
  state[5] = seed1;
  state[8] = seed2;
  state[9] = seed3;
\end{lstlisting}}

\subsection{Permutation}

Randen's \texttt{Permute} is a generalized type-2 Feistel network
\cite{type2Feistel} with 16 branches of 128 bits.

\begin{figure}[ht]
\includegraphics[trim=6cm 7.5cm 11cm 7.5cm]{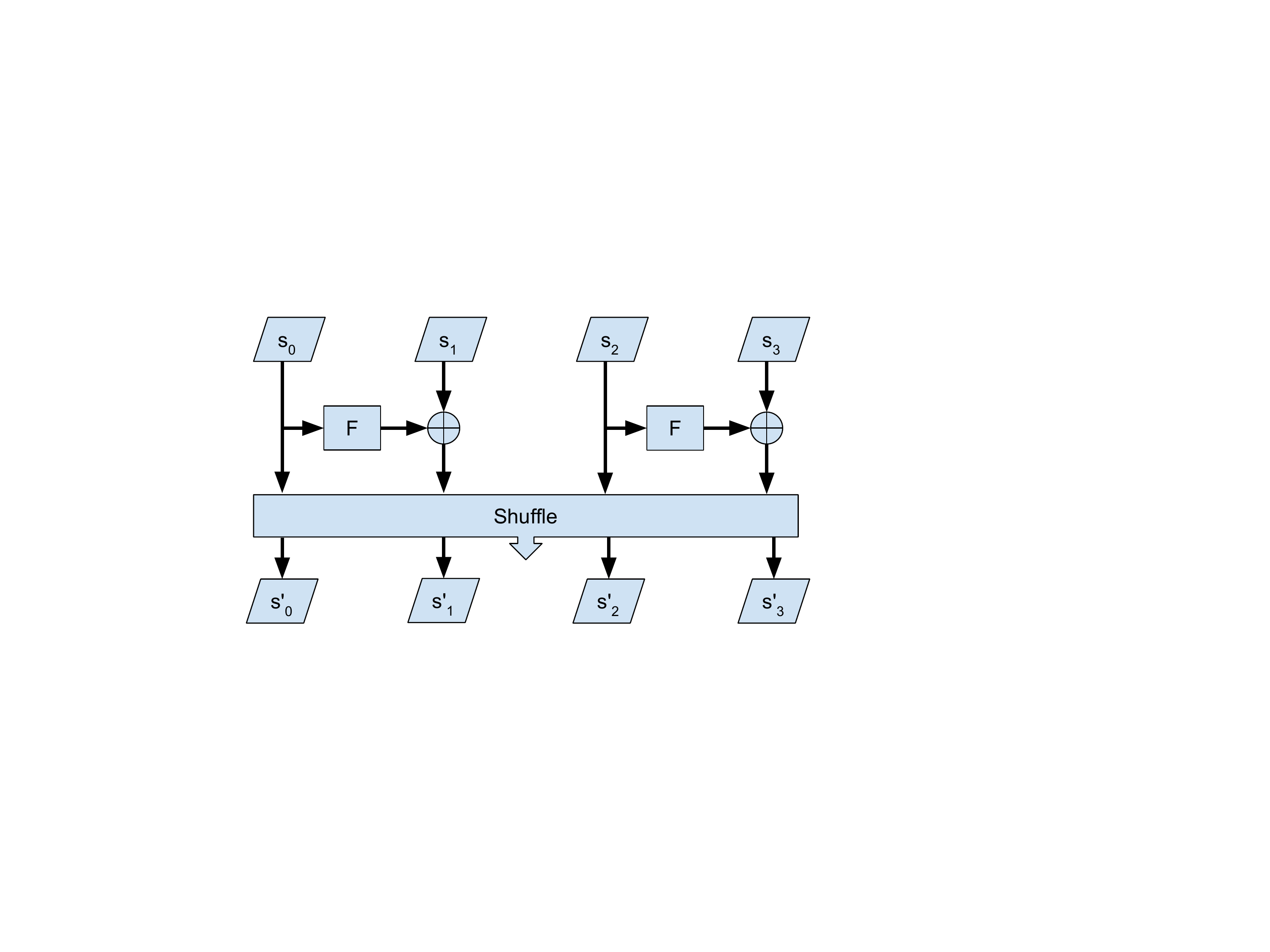}%
\caption{One round of a four-branch type-2 generalized Feistel network
with a block shuffle. F is a 128-bit permutation consisting of two AES
rounds, described below.}
\label{fig:feistel}
\end{figure}

\noindent
It consists of two layers (Figure~\ref{fig:feistel}). The first (denoted
\texttt{RoundFunctions}) XORs odd-numbered branches $\mathrm{s}_\mathrm{i}$
with a function \texttt{F} of their even-numbered neighbors. \texttt{F} is the
same as in Simpira~v2 \cite{simpira}: two rounds of AES. The first round's
constant, denoted \texttt{key}, is unique for every instance of \texttt{F}.
This avoids any potential weaknesses due to weak or structured round
constants, e.g.\ in Simpira v1 \cite{simpira}[p.~12]. We will discuss the size
and purpose of the constants in the description of \texttt{Permute} below.
The second constant is zero, which enables an optimization below.

{\small
\begin{lstlisting}[numbers=none]
  // Round function: two-round AES with a unique round constant.
  V F(const V even, const V key) {
      const V f1 = AES(even, key);
      return AES(f1, zero);
  }
\end{lstlisting}}

\noindent
For every adjacent pair of \texttt{even} and \texttt{odd} branches,
\texttt{RoundFunctions} loads the two corresponding 128-bit pieces of the
state and overwrites \texttt{odd} with \texttt{F(even, key) XOR odd}.

{\small
\begin{lstlisting}[numbers=none]
  const V* RoundFunctions(const V* keys) {
    for (int branch = 0; branch < 16; branch += 2) {
      const V even = Load(state, branch);
      const V odd = Load(state, branch + 1);
      const V new_odd = F(even, *keys++) ^ odd;
      Store(new_odd, state, branch + 1);
    }
    return keys;
  }
\end{lstlisting}}

\noindent
Note that the XOR can be computed for free because the last step of AES simply
XORs with its round constant. We change the second AES round constant in
\texttt{F} from \texttt{zero} (which has no effect) to \texttt{odd}. The
\texttt{key} passed to each call to \texttt{F} comes from an array of eight
AES keys.

The second layer of the Feistel network (denoted \texttt{BlockShuffle})
rearranges the 128-bit branches into the prescribed order
\cite{improvedFeistel}[p.~21, no.~10]. We permute the state such that the
previous branch 7 comes first, followed by 2, 13, 4 and so on
(see \texttt{shuffle} below):

{\small
\begin{lstlisting}[numbers=none]
  void BlockShuffle() {
    uint64_t source[32];
    memcpy(source, state, sizeof(source));

    constexpr int shuffle[16] = {
      7, 2, 13, 4, 11, 8, 3, 6, 15, 0, 9, 10, 1, 14, 5, 12};
    for (int branch = 0; branch < 16; ++branch) {
      const V v = Load(source, shuffle[branch]);
      Store(v, state, branch);
    }
  }
\end{lstlisting}}

\noindent
Together, these two layers constitute one round of a generalized Feistel
network. The final permutation \texttt{Permute} consists of 17 rounds. Each
invocation of the \texttt{RoundFunctions} layer requires eight AES round
constants, for a total of 2176 bytes.

{\small
\begin{lstlisting}[numbers=none]
  void Permute() {
    // Round keys for one AES per Feistel round and branch.
    const V* keys = Keys();

    for (int round = 0; round < 17; ++round) {
      keys = RoundFunctions(keys);
      BlockShuffle();
    }
  }
\end{lstlisting}}

\noindent
The \texttt{keys} can be a fixed array of nothing-up-my-sleeve numbers shared by
all generators. However, our indistinguishability result
(Section~\ref{sec:indist}) assumes a keyed/secret permutation,
otherwise attackers could distinguish the permutation from random by
querying it and verifying the expected result. Applications running on
secure servers may reasonably expect that attackers do not have access to
the key. For additional safety, applications could instead generate
the keys at startup using a stream cipher such as ChaCha20 keyed with 256 bits
obtained from a trusted source such as the operating system.
Note that the generator remains backtracking-resistant
(Section~\ref{sec:backtracking}) even if the keys are leaked.

\subsection{Generator}

Now that we have defined the permutation, Reverie's \texttt{Generate} produces
random outer bits by invoking  \texttt{Permute} on the state and XORing the
inner bits with the value they had before the permutation, which cannot be
reversed by an attacker with knowledge of the current state \cite{reverie}:
{\small
\begin{lstlisting}[numbers=none]
  void Generate() {
    const uint64_t prev_inner[2] = { state[0], state[1] };

    Permute();

    // Ensure backtracking resistance.
    state[0] ^= prev_inner[0];
    state[1] ^= prev_inner[1];
  }
\end{lstlisting}}

\noindent
As a result, the last 1920 bits of \texttt{state} are uniform random and
available for use. In practice, the generator is packaged as a C++ `random
engine' that returns 32 or 64-bit bundles of random bits and calls
\texttt{Generate} again once all remaining bits have been consumed.

\section{Rationale}
\label{sec:rationale}

Here we briefly justify design decisions.

\begin{description}
\item[The AES block cipher] is well-understood and often hardware-accelerated.
Intel's AESNI instructions \cite{intelAESNI} are five to ten times faster than
optimized software implementations \cite{kasperAES, softwareAES2}. This
implies a software-only Randen would be unacceptably slow (and likely
vulnerable to side-channel attacks). On CPUs without hardware AES, it may be
faster to replace AES with SIMD-friendly permutations such as ChaCha
\cite{ebacsStream}. However, most modern CPUs have AES hardware, including
POWER (\texttt{VCIPHER} \cite{powerAES}) and ARMv8 (\texttt{AESE}
\cite{armManual}).

\item[Two AES rounds] are necessary for full-bit diffusion \cite{simpira} and
more efficient than a single round in terms of the ratio of active
s-boxes\footnote{in a standard type-2 Feistel with four branches.}
\cite{doubleSP}.

\item[Dense and independent AES round keys] ensure that an AES round breaks
the symmetry of plaintext with all-equal columns. We use unique keys to rule
out attacks similar to those on Haraka~v1 \cite{harakaV2} and Simpira~v1
\cite{simpira}. This requires a total of 2176 bytes, which is somewhat
excessive, but the keys are typically hardcoded (but not necessarily public)
nothing-up-my-sleeve numbers and there is little cost to loading unique keys
because they easily fit in the L1 cache.

\item[Type-2 generalized Feistel networks] are often used to construct large
permutations from smaller blocks. These constructions are `sound' in the sense
that they are strong pseudorandom permutations after sufficient rounds of a
pseudorandom function \cite{improvedFeistel}. In contrast to the $b > 8$
variants of Simpira~v2 \cite{simpira}, they enable good performance without
relying on multiple independent inputs to keep the CPU pipeline filled.

\item[An improved block shuffle] for the generalized Feistel network reaches
full sub-block diffusion (i.e.\ each block depends on every other input block)
much sooner than traditional cyclic shifts \cite{improvedFeistel}. It also
reduces vulnerability to sliced-biclique \cite{newDiffusion} and integral
attacks \cite{genFeistelEval}[p.~226].

\item[16-branch generalized Feistel networks] are the largest for which the
diffusion properties are known \cite{newDiffusion}. Larger branch counts have
two related benefits without requiring multiple independent inputs like
Simpira~\cite{simpira}. First, they enable parallel evaluation of the round
functions, which hides the long latency of \texttt{AESENC} \cite{intelOpt}.
Second, they can benefit from increased hardware parallelism such as recently
announced quadruple-AES hardware \cite{intelExtensions}[p.~2-14].

\item[A 2048-bit permutation] is a natural result of 16-branch Feistel with
128-bit AES blocks. Larger states cannot be accommodated within the 16 SSE4
registers.

\item[17 Feistel rounds] improve the diffusion relative to the minimum of
16~rounds required for Feistel block diffusion (propagating input differences
to each branch of the state) \cite{improvedFeistel}.

\item[Reverie] is an efficient construction for backtracking-resistant
generators. It avoids the heavy rekeying cost of CTR-DRBG and exposes
fewer prior outputs than an only periodically re-keyed stream cipher.

\item[Reseeding] the state from external entropy sources periodically is
beyond the scope of this paper because our applications typically require
reproducible sequences of random numbers.

\end{description}

\section{Implementation details}
\label{sec:impl}

We implement the algorithm in C++ using SIMD intrinsics that are available on
current Intel, AMD and POWER CPUs. The final optimized code is quite short
(only about 150 lines) and very similar to the straightforward listings above!
If \texttt{state} is a restrict-qualified pointer, Clang understands that
\texttt{BlockShuffle} simply renames memory locations. We have released this
code \cite{randenGithub} under an open-source license so our results can be
reproduced.

In the rest of this section, we study how well the algorithm maps to the
Haswell and Skylake microarchitectures. Despite the high-level implementation,
the measured \texttt{Permute} throughput is within 5\% of the lower bound (one
\texttt{AESENC} per cycle). Intel's IACA simulator \cite{intelIACA} reports
the code is bottlenecked by the `frontend' in addition to the expected port 5
(\texttt{AESENC}), but still claims its throughput should exactly match the
lower bound. Note that IACA does not model memory accesses, and the limited
set of 16 SSE4 registers necessitates many spills to memory, so the 5\%
difference is probably due to loads. However, we also investigate the alleged
frontend limitation using performance counters captured via the Linux perf
utility.

Is decode throughput the bottleneck? This can be a problem because the 16 byte
fetch window (unchanged since Pentium Pro) is too small for large SIMD
instructions (\emph{nine} bytes for \texttt{AESENC} with a 32-bit offset). Two
such instructions do not fit in a fetch window, so only one can decode per
cycle. However, Sandy Bridge and later Intel CPUs include a decoded
instruction cache (DSB), which is very helpful because it avoids the 16-byte
limitation. Indeed we find 99.9\% of \textmu ops are delivered from the DSB.
However, the effective DSB capacity is lower than the documented maximum of
1536 \textmu ops. Fully unrolling the Feistel rounds generates about
750~\textmu ops and causes a 10x increase in DSB misses. Unrolling by a factor
of two generates good code.

Is microcode a factor? In 2012 there was speculation that \texttt{AESENC}
uses the microcode sequencer (MSROM) \cite{aesMSROM}. We can confirm this is
not the case (on Haswell) because \verb|IDQ.MS_UOPS (79_30)| is zero.
Given the low values of \verb|IDQ_UOPS_NOT_DELIVERED.CORE (9C_01)|, we can
conclude the bottleneck does not involve the decoders.

What about other stalls? \verb|LD_BLOCKS_PARTIAL.ADDRESS_ALIAS (07_01)|
detects 4K aliasing between compiler-generated spills to the stack and loads
of round keys. This is difficult to reliably avoid, but only affects 1\% of
all instructions. \verb|RESOURCE_STALLS (A2_FF)| affect 18\% of all
instructions; 90\% of these are waiting for the reservation station. We
speculate that this is due to a lack of physical registers and/or waiting for
loads. Either way, the problem should disappear on Skylake. With its 32 vector
registers, we can devote 8 to the AES inputs and outputs (updated in-place)
and 8+8 to hold the XOR inputs for the next two rounds, thus entirely avoiding
spills. In summary, it appears difficult to further optimize the
implementation. We emphasize that the compiler and out-of-order CPU extract
good performance (within 5\% of the lower bound) from our minimally annotated
high-level language implementation.

\section{Smoke test}

Every random generator should avoid `recognizable patterns', which can cause
systematic errors in applications such as simulations \cite{rngCorrelation}.
In the next section, we argue Randen is computationally indistinguishable from
random, which implies the non-existence of any patterns. However,
general-purpose generators are unable to furnish such arguments, so they
instead apply statistical tests to detect obvious flaws. Several batteries of
tests are well-known and often used for verifying empirical randomness. For
completeness, we also apply them to Randen. We begin with BigCrush from
TestU01 version 1.2.3 \cite{testu01}. Its interface requires a small wrapper
around the raw generator \cite{testu01Guide}:

{\small
\begin{lstlisting}[numbers=none]
  randen::Randen<uint32_t> engine;
  uint32_t Rand32() { return engine(); }

  int main(int, char*[]) {
    unif01_Gen* gen = unif01_CreateExternGenBits("R", Rand32);
    bbattery_BigCrush(gen);
    unif01_DeleteExternGenBits(gen);
    return 0;
  }
\end{lstlisting}}

\noindent All 160 tests pass for PCG \cite{pcg} and Randen with original and
inverted bits. By contrast, BigCrush reports two failures when testing MT19937
and one near-failure for AES-CTR (p-value of 0.000092, but it did not recur
in subsequent test(s)) \cite{testu01}.

We also test Randen with the current version 0.93 of PractRand
\cite{practrand}. To avoid file or pipe overhead, we integrate Randen into the
\texttt{DummyRNG} class by having its \texttt{raw32} function return Randen's
output. The test battery is invoked with default settings via
\verb|./RNG_test dummy -multithreaded|. Running all tests up to the upper limit
of 32~terabytes reports two `unusual' p-values (0.9921 and 0.0013). Note that
\texttt{pcg64} also leads to an unusual p-value (0.0016) in a much smaller
test, and failures have more extreme p-values, e.g.\ $10^{-351}$ for Mersenne
Twister \cite{practrandXORO}. We conclude that Randen passes state of the art
tests of empirical randomness about as well as \texttt{pcg64} and better than
Mersenne Twister.

\section{Security}
\label{sec:security}

Some developers are unaware that randomized applications can be vulnerable to
adversaries and we have observed reluctance to sacrifice speed for security. It
is expensive to audit tens of thousands of random generator usages to
determine the appropriate security/speed tradeoff. We therefore propose to
provide a higher baseline level of security than existing general-purpose
generators. To gain user acceptance, we ensure our generator remains within the
performance envelope of Mersenne Twister. What security guarantees can we
provide? In this paper, a `strong' generator is characterized by two properties:
computational indistinguishability from random, and backtracking resistance. In
the following, we show that these hold for Randen.

\subsection{Indistinguishability}
\label{sec:indist}

`Indistinguishable from random' is a very strong property often used in
cryptography. We emphasize that security-critical applications should continue
to use trusted cryptographically secure generators. However, other
applications also benefit from a strong generator. Indistinguishability
implies the output is unpredictable, which prevents adversaries from
triggering worst case execution time in randomized algorithms such as
Quicksort (quadratic rather than linearithmic time), or influencing the
samples drawn by randomized online sampling algorithms.

We now apply a standard computational indistinguishability argument. Suppose a
deterministic adversary is given query access to either a real or ideal (i.e.\
uniform random) generator and returns 0 or 1 to indicate which generator it is
interacting with. We assume an adversary cannot issue more than $2^{64}$
permutation queries.  Then, a real generator is computationally
indistinguishable from random if the distinguishing advantage (absolute
difference in probability of any such adversary returning 1 when given the
ideal vs.\ real generator) is negligible.

\begin{lemma}
\label{lem:randenIndist}
In the ideal permutation model, if the Randen permutation is replaced with an
ideal permutation, Randen is indistinguishable from random by adversaries
limited to $2^{64}$ permutation queries.
\end{lemma}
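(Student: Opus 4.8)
The plan is to exploit the fact that Randen is, by construction, an instantiation of Reverie, so the cleanest route is to reduce to the security statement already established for Reverie \cite{reverie} in the ideal permutation model and then specialize its generic bound to Randen's concrete parameters: capacity (inner width) $c = 128$, state width $b = 2048$, rate (outer width) $b - c = 1920$, and at most $q = 2^{64}$ permutation invocations. First I would write the distinguishing game out explicitly: in the real world the deterministic adversary interacts with the generator whose \texttt{Permute} has been replaced by a uniformly random permutation $P$ (secret, modeling the keyed construction), observing the $1920$-bit outer blocks returned by successive \texttt{Generate} calls; in the ideal world it sees independent uniform blocks. The advantage is the gap between its output-bit probabilities in the two worlds.

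Next I would track the internal state as a hidden inner part $I_i$ (the $128$-bit capacity) together with a revealed outer part. Writing $(J_i \,\|\, R_i) = P(S_i)$ for the permutation output, the \texttt{Generate} feedback gives the recurrence $S_{i+1} = (I_i \oplus J_i)\,\|\,R_i$, where the emitted block is exactly $R_i$ and hence becomes the known outer part of the next state, while the inner part $I_i \oplus J_i$ stays hidden. The core claim is that, so long as every permutation input $S_i$ is fresh — distinct from all previous inputs and from any point the adversary itself queries — the value $P(S_i)$ is a fresh uniform draw, so each $R_i$ is uniform and independent, matching the ideal world. The whole argument therefore reduces to bounding the probability of the complementary ``bad'' events.

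I would then isolate those events. A full collision between two internal inputs, $S_i = S_j$, would force both the $1920$-bit outer parts and the $128$-bit inner parts to agree; since the outer parts are (conditionally) uniform, this pairwise probability is astronomically small, so internal state collisions are a non-issue. The genuinely constraining events involve the adversary's permutation access: hitting a specific internal input $S_i$ requires guessing its hidden $128$-bit capacity, costing $2^{-c}$ per query, and -- in the variant where $P$ is treated as public -- connecting two query chains through a coincidence in the capacity produces the familiar birthday term. A union bound over the $q \le 2^{64}$ queries therefore yields an advantage of sponge type, on the order of $q^2 / 2^{c}$, and it is precisely the $c = 128$ capacity that sets the safe query budget at $2^{64}$.

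The hard part is twofold. Technically, the obstacle is showing that the hidden capacity stays uniform and unpredictable as the state evolves -- that $I_i \oplus J_i$ inherits the freshness of $J_i$ -- while conditioning carefully on ``no bad event so far'' to avoid circularity between the randomness of $P$ and the adversary's adaptive queries; this is the step that produces the birthday-type capacity bound. Conceptually, the assumption that must be flagged is the secrecy of $P$ (equivalently, of Randen's round keys): because the inner state is zero-initialized and the seed sits only in outer branches, a public $P$ could be queried at the fully known initial state and used to recompute every output deterministically, defeating indistinguishability. The proof must invoke this secrecy -- exactly the caveat noted in the specification -- to keep the adversary from cheaply landing on internal states.
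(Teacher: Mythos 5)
Your top-level strategy is exactly the paper's: Lemma~\ref{lem:randenIndist} is proved there in two sentences by citing Reverie's ideal-permutation-model guarantee that the best attack must guess the 128-bit inner state, at an average cost of $2^{127}$ permutation evaluations, which exceeds the $2^{64}$ query budget. Where you differ is that you actually unpack the sponge argument the citation hides --- the recurrence $S_{i+1}=(I_i\oplus J_i)\,\|\,R_i$, the freshness/bad-event decomposition, and the observation that state-on-state collisions are governed by the full $2048$-bit width while adversarial hits are governed by the $128$-bit capacity. That is all correct in structure and is genuinely more informative than the paper's proof; you also correctly anticipate the keyed-permutation caveat, which the paper states only \emph{after} the proof. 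The one point to fix is quantitative: you conclude with an advantage ``on the order of $q^2/2^c$,'' but with $q=2^{64}$ and $c=128$ that is $\Theta(1)$ and would \emph{not} establish the lemma. The birthday term $q^2/2^c$ belongs to settings where the adversary can feed inputs into the sponge (hashing, duplexing, or a PRNG with adversarial refresh); here there is no refresh, so the only capacity-related bad event is an adversary query landing on one of the generator's hidden inner states, giving a bound of the form $q\sigma/2^c$ (or, as the paper implicitly treats it, a single-target guessing cost of $2^{127}$ evaluations), which is comfortably negligible at $q=2^{64}$ provided the number $\sigma$ of \texttt{Generate} calls is accounted for. State the bound in that linear form and your argument closes the lemma; as written, the $q^2$ term concedes the game exactly at the query budget the lemma claims to tolerate.
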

\begin{proof}
Randen is an instantiation of Reverie, which guarantees that the
best possible attack must guess its inner bits \cite{reverie}[p.~12]. That
requires an average of $2^{127}$ evaluations of the Randen permutation, which
is beyond the capabilities of our assumed adversary.
\end{proof}

There are two practical difficulties with the ideal permutation model. First,
attackers can trivially distinguish a Randen permutation with known key by
simply querying it. In this section, we need to assume the permutation is
keyed. Second, a truly random permutation is impractical because its
representation requires $\log_2{2^{128}!} \approx 10^{40}$ bits. We could
instead argue that the generalized Feistel structure of the Randen permutation
ensures it would be indistinguishable from random if its round functions were
pseudorandom \cite{improvedFeistel}. However, our round function consists of
two rounds of AES, and up to three are efficiently distinguishable from random
\cite{designAES}. We could construct a round function that is believed to be
indistinguishable from a random function by XORing two permutations
\cite{sumPRP} that are widely recognized to be secure, such as 10~rounds of
AES. Unfortunately this would be about ten times slower. Instead, we will
study known attacks on the actual Randen permutation.

\begin{table}[h]
\caption{Lower bound on active functions after a given number of rounds of a
16-branch type-2 Feistel network with improved block shuffle. Derived via
exhaustive search in Appendix~\ref{sec:search}.}
\label{tab:active}
\centering
\begin{tabular}{r|l|r|l}
Rounds & Active Functions & Rounds & Active Functions\\
\midrule
1  &  0 & 13 & 27\\
2  &  1 & 14 & 30\\
3  &  2 & 15 & 32\\
4  &  3 & 16 & 35\\
5  &  4 & 17 & 36\\
6  &  6 & 18 & 39\\
7  &  8 & 19 & 41\\
8  & 11 & 20 & 44\\
9  & 14 & 21 & 45\\
10 & 18 & 22 & 48\\
11 & 22 & 23 & 50\\
12 & 24 & 24 & 53\\
\end{tabular}
\end{table}

The security of Substitution-Permutation (SP) networks such as AES is often
established by showing sufficiently many s-boxes are active to resist
differential and linear attacks \cite{simpira}. Such results are also available
for generalized Feistel networks, but they are specific to the number of
branches and type of round function. We use 16 branches and SPSP-type
functions (two rounds of AES). Existing results are available for either
situation, but not both. 6 rounds of SPSP functions in a 4-branch type-2
network guarantee 6 differentially active \emph{functions} \cite{doubleSP}.
17 rounds of SP functions in a 16-branch network with improved diffusion
guarantee 78 active s-boxes \cite{feistelSboxes}[p.~226]. Later in this section,
we provide new results for 16-branch networks with SPSP functions.

Note that 16-branch Feistel networks have a maximum impossible differential
characteristic of 14 rounds \cite{improvedFeistel}, and the sliced biclique
technique only attacks 15 rounds \cite{newDiffusion}. A recent attempt to find
integral distinguishers reports `difficulty' for such large branch counts
\cite{genFeistelEval}[p.~219]. We compute new lower bounds for active
functions in 16-branch type-2 Feistel networks via exhaustive search. Details
of the algorithm are deferred to Appendix~\ref{sec:search}. The resulting
lower bounds are given in Table~\ref{tab:active}. Note that we are able to
compute bounds for up to 24 rounds, whereas prior results for 16-branch
Feistel networks only extend to 18 rounds \cite{feistelSboxes}. A meet in the
middle attack \cite{twineMITM} splits a permutation into three parts. Hence,
we consider the number of active functions after six rounds.

\begin{theorem}
\label{the:diff}
The probability of differential characteristics and correlation of
linear characteristics of six rounds of the Randen permutation are at most
$2^{-180}$ and $2^{-90}$.
\end{theorem}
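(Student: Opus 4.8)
The plan is to reduce the statement to a count of active S-boxes, combining the branch bound of Table~\ref{tab:active} with the wide-trail properties of AES. First I would read off from Table~\ref{tab:active} that every six-round differential or linear characteristic of the 16-branch Feistel network activates at least six of the round functions \texttt{F}. An \emph{active} function is one whose even-branch input carries a nonzero difference (resp.\ nonzero linear mask); the XOR of \texttt{F}'s output into the odd branch and the \texttt{BlockShuffle} are linear, so they relocate differences but do not change the count of active S-boxes inside \texttt{F}.

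Next I would lower-bound the active S-boxes within each active \texttt{F}. Since \texttt{F} is two AES rounds and the branch number of MixColumns is~5, the wide-trail strategy forces any nontrivial trail through two AES rounds to activate at least five S-boxes (in the extremal case, a single active byte entering the first \texttt{SubBytes} becomes, after MixColumns, at least four active bytes in one column, each feeding a second-round S-box, for $1+4=5$). Summing over the six active functions yields at least $6 \times 5 = 30$ active S-boxes for the whole six-round characteristic.

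Finally I would invoke the per-S-box maxima for AES \cite{designAES}: the maximum differential probability is $2^{-6}$ and the maximum absolute linear correlation is $2^{-3}$. Under the Markov-cipher assumption---valid here because the round keys are dense and independent---the differential probability of a characteristic is the product of its S-boxes' differential probabilities, and by the piling-up lemma the correlation of a linear characteristic is the product of its S-boxes' correlations. This gives a differential probability of at most $(2^{-6})^{30} = 2^{-180}$ and a linear correlation of at most $(2^{-3})^{30} = 2^{-90}$, as claimed.

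The hard part is making the two reductions compose soundly. The Feistel analysis counts active \emph{functions}, whereas the AES bound counts active \emph{S-boxes}, so I must confirm that a single active function genuinely forces five \emph{distinct} active S-boxes (the branch-number argument applied to one column) and that the S-box sets of distinct active functions are disjoint, which is what legitimizes multiplying the per-S-box bounds across all $30$. The per-S-box extrema and the independence underlying the Markov assumption are standard and follow from the AES design and the use of independent round keys.
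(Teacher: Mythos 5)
Your proposal is correct and follows essentially the same route as the paper's proof: read off at least six active functions from Table~\ref{tab:active}, use the branch number $\mathcal{B}(M)=5$ of the AES linear layer to get at least five active s-boxes per active SPSP function (hence 30 total), and multiply the per-s-box maxima $2^{-6}$ and $2^{-3}$ from \cite{designAES}. The extra detail you supply on the Markov-cipher assumption and the disjointness of s-box sets is a sound elaboration of steps the paper leaves implicit, but it does not change the argument.
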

\begin{proof}
Per Table~\ref{tab:active}, at least six functions are active after six
rounds. Each active SPSP function provides at least $\mathcal{B}(M)$ active
s-boxes \cite{doubleSP}. $\mathcal{B}(M)$ is the branch number of the SP
permutation layer, which is 5 for AES. Thus, at least 30 s-boxes are active.
Each active AES s-box contributes a maximum differential probability $2^{-6}$
and correlation amplitude $2^{-3}$ \cite{designAES}. Thus, the overall
differential probability and linear correlation are $2^{-6 \cdot 30}$ and
$2^{-3 \cdot 30}$.
\end{proof}
\noindent
Note that Simpira's security arguments only require 25 active s-boxes
\cite{simpira}. Also, Table~\ref{tab:active} indicates there are
$36 \cdot \mathcal{B}(M) = 180$ active s-boxes after our 17 rounds with
SPSP functions. By contrast, the prior bound for SP functions only guarantees
78~active s-boxes after 17~rounds \cite{feistelSboxes}[p.~226].

\begin{claim}
A keyed Randen permutation cannot be distinguished from random with complexity
less than $2^{64}$.
\end{claim}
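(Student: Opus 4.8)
The plan is to establish that the concrete keyed Randen permutation is a pseudorandom permutation up to complexity $2^{64}$, i.e.\ indistinguishable from a uniform random permutation by any adversary operating below that bound. This is precisely the property needed to lift Lemma~\ref{lem:randenIndist} from the ideal-permutation model to the real construction: combined with that lemma, it yields full indistinguishability of Randen. Because Section~\ref{sec:indist} assumes the permutation is keyed, the trivial query-and-verify distinguisher that defeats a public permutation no longer applies, so the task reduces to surveying the standard distinguishing attacks on such constructions and showing that each requires complexity far exceeding $2^{64}$.

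First I would treat differential and linear cryptanalysis, which are the attacks most directly controlled by our active-function counts. By Theorem~\ref{the:diff}, the maximum differential characteristic probability over any six rounds is at most $2^{-180}$ and the maximum linear correlation at most $2^{-90}$; over the full 17 rounds the bounds are stronger still, since Table~\ref{tab:active} guarantees 180 active s-boxes. A differential distinguisher therefore needs on the order of $2^{180}$ chosen pairs and a linear distinguisher roughly $2^{180}$ known outputs, both vastly beyond the $2^{64}$ threshold.

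Next I would address the structural attacks that do not reduce to characteristic probabilities. The surrounding discussion already records that impossible-differential characteristics reach only 14 rounds \cite{improvedFeistel}, the sliced-biclique technique only 15 rounds \cite{newDiffusion}, and integral distinguishers are reported difficult at this branch count \cite{genFeistelEval}; since all of these fall short of 17 rounds, none extends to the full permutation. The remaining concern is the three-part meet-in-the-middle attack \cite{twineMITM}: splitting 17 rounds into three segments leaves each segment with roughly six rounds, and here the six-round bounds of Theorem~\ref{the:diff} are exactly what preclude the attacker from stitching together a useful path with complexity below $2^{64}$. This is why the number of active functions after six rounds is the relevant quantity.

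The main obstacle is that, as with all concrete-construction security arguments, this is necessarily a survey of \emph{known} attacks rather than an unconditional reduction; we cannot prove the nonexistence of some future technique, only that the best published attacks fail by a wide margin. Quantitatively the strongest leverage comes from the active-s-box lower bounds of Table~\ref{tab:active}, which are themselves the output of the memory-efficient exhaustive search of Appendix~\ref{sec:search}, so the credibility of the whole argument rests on the correctness of that search. The subtlest reasoning step is the meet-in-the-middle analysis, since it requires arguing that the per-segment differential and linear bounds remain binding once the three parts are composed, rather than being circumvented by matching in the middle.
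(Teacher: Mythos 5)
Your proposal matches the paper's own justification of this claim: the paper likewise treats it as a conservative estimate resting on the differential/linear bounds of Theorem~\ref{the:diff}, the known limits of impossible-differential, sliced-biclique and integral distinguishers at 16 branches, the meet-in-the-middle rationale for focusing on six-round segments, and an explicit acknowledgment that the argument is a survey of known attacks rather than a reduction. The only point you omit is the paper's brief remark that symmetry attacks on AES are ruled out by the dense, independent round keys (and you quote the linear attack complexity as $2^{180}$ where the paper conservatively states $2^{90}$ --- both comfortably above $2^{64}$); otherwise the reasoning is essentially the same.
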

This bound is a conservative estimate based on our initial analysis. Per
Theorem~\ref{the:diff}, differential/linear attack complexity is $2^{180}$ and
$2^{90}$. Symmetry attacks on AES are also unlikely to succeed
because our round keys lack structure. Note that Randen involves 17 AES
subrounds per 16 permuted bytes, versus only 10 for the AES-128 cipher. Any
distinguishers would seem to imply new (or unknown to us) attacks on
generalized Feistel with AES-like rounds.

For comparison, a recent successful attack on the full SHA-1 involved $2^{63}$
work at an estimated cost of 110,000 USD \cite{sha1Attack}. We assume this is
a sufficient deterrent to predicting outputs.

\begin{lemma}
If a computationally bounded adversary cannot distinguish the Randen
permutation from random, then they cannot predict the Randen output with less
than $2^{64}$ work based only on prior outputs.
\end{lemma}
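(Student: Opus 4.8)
The plan is to prove the contrapositive via a standard reduction: if an adversary could predict Randen's output with fewer than $2^{64}$ operations given only prior outputs, then I would build a distinguisher for the Randen permutation that also runs in fewer than $2^{64}$ operations, contradicting the hypothesis. First I would make the prediction game precise: the predictor $\mathcal{P}$ receives a sequence of consecutive generator outputs (the outer 1920 bits produced by successive calls to \texttt{Generate}) and must output a guess for the next output block, succeeding with probability non-negligibly better than the $2^{-1920}$ baseline of random guessing.

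The heart of the reduction is to run $\mathcal{P}$ inside a distinguisher $\mathcal{D}$ that has oracle access to either the real Randen permutation or an ideal permutation. Given the oracle, $\mathcal{D}$ simulates the Reverie generator step by step --- maintaining a simulated state, applying the oracle as \texttt{Permute}, and performing the inner-bit XOR from \texttt{Generate} --- thereby manufacturing exactly the stream of outputs that $\mathcal{P}$ expects. I would feed these simulated outputs to $\mathcal{P}$, obtain its prediction, then advance the simulation one more step to reveal the true next output and compare. $\mathcal{D}$ returns $1$ when $\mathcal{P}$'s guess matches. The key observation is that when the oracle is the real permutation, $\mathcal{D}$ reproduces the genuine Randen distribution, so $\mathcal{P}$ succeeds with its assumed advantage; when the oracle is ideal, each output block is (by Lemma~\ref{lem:randenIndist}) uniform and independent of the prior stream, so $\mathcal{P}$ can do no better than the negligible baseline. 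The gap between these two success probabilities is precisely the distinguishing advantage of $\mathcal{D}$, and its query count is $\mathcal{P}$'s running time plus a constant, hence still below $2^{64}$.

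The main obstacle is bounding the simulation overhead and the cost of sampling the \emph{initial} inner bits, since $\mathcal{D}$ must start from a state whose inner 128 bits it does not reveal to $\mathcal{P}$ yet must be consistent with a legitimate generator. I would handle this by having $\mathcal{D}$ draw the initial inner bits uniformly at random, matching how Reverie's security is stated in the ideal-permutation model, so that in the ideal-oracle case the produced stream is genuinely uniform and in the real-oracle case it matches Randen seeded with those bits. A secondary subtlety is ensuring $\mathcal{P}$ is given only prior outputs and never the state, which the simulation respects by construction since $\mathcal{D}$ withholds the inner bits. The remaining estimate --- that each simulated \texttt{Generate} costs one oracle query plus $O(1)$ bookkeeping --- is routine, and the conclusion follows by contraposition against the Claim that no distinguisher succeeds with complexity below $2^{64}$.
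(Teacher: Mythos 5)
Your proposal is correct and follows essentially the same route as the paper: the paper's proof also substitutes a (randomly keyed) permutation oracle to invoke Lemma~\ref{lem:randenIndist} and then notes that ``predicting a future output would also allow an adversary to distinguish the generator from random,'' which is exactly the prediction-to-distinguishing reduction you spell out in detail. The only nitpick is that in the ideal-oracle case the outputs are indistinguishable from uniform rather than literally uniform, so bounding $\mathcal{P}$'s success there formally requires one more (routine) appeal to Lemma~\ref{lem:randenIndist}; otherwise your write-up is a more explicit version of the paper's argument.
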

\begin{proof}
In this setting, adversaries do not know the AES round keys. The only way
adversaries can access the permutation is by requesting random output. We can
meet the requirements of Lemma~\ref{lem:randenIndist} by instantiating Randen
with an oracle implementing a randomly keyed Randen permutation. From the
perspective of the adversary, this behaves in the same way as Randen
instantiated with a real permutation. Then, the Randen output is
indistinguishable from random, which implies unpredictability by contradiction
(predicting a future output would also allow an adversary to distinguish the
generator from random).
\end{proof}

Note that ``based only on prior outputs'' excludes cases where attackers gain
access to the inner state. By contrast, NIST~800~90a requires prediction
resistance even after the state is compromised \cite{nistRequirements}. This
would entail periodic reseeding from external entropy sources, which we must
avoid to ensure repeatability. Instead, we note that side-channels such as
core dumps and paging \cite{stateSecrecy} are less relevant in a server
environment and can be mitigated with the help of the operating system (using
\texttt{madvise} and \texttt{mlock}). Then, attackers can only guess the inner
state at a cost of $2^{127}$, which is beyond their assumed capability.

\subsection{Backtracking resistance}
\label{sec:backtracking}

The second property is backtracking resistance: adversaries have a negligible
advantage at distinguishing prior outputs from random even if they gain access
to the state \cite{onBacktracking}. This is important for portable devices and
long-running applications without access to external entropy because it
ensures adversaries cannot reconstruct prior outputs. If a generator is
robust, it also provides backtracking resistance (also known as forward
security) \cite{dodisRequirements}. Reverie is robust in the ideal permutation
model \cite{reverie} and the previous section argues that instantiating
Reverie with a random Randen permutation retains its security guarantees.
However, robustness requires reseeding the generator from external entropy,
which is not always possible in our applications. We instead show that
backtracking resistance follows from the security of Reverie's next function
\cite{reverie}[p.~12], i.e.\ Randen's \texttt{Generate}. Assume an adversary
has gained access to the current \texttt{state} and AES keys. This allows them
to invert the Randen permutation. Note that Reverie's security model assumes a
public permutation that attackers can already invert. We will illustrate the
backtracking resistance in a scenario with two calls to \texttt{Generate}.
Additional calls do not affect the argument.

For the following, let us define new notation: the state after the first
($k=1$) and second ($k=2$) call to \texttt{Generate} can be partitioned into
inner/outer parts $i_k$ and $o_k$. Let $i_0$ and $o_0$ denote the uniform
random initial state. Per Lemma~1 of Reverie \cite{reverie}[p.~12], the return
values of \texttt{Generate} are indistinguishable from random. However, the
attacker knows all random outputs (i.e.\ outer states $o_0,o_1$) and learns
the current state $i_2,o_2$. Does this allow them to recover the remaining
prior inner states $i_0,i_1$? Recall that the final \texttt{Generate} returns
\texttt{Permute}($i_1,o_1$) XOR ($i_1$, zero). All terms except $i_1$ are
known. However, the attacker cannot query the permutation in either direction
without guessing the $i_1$ value at a cost of $2^{127}$; this is best possible
attack on Reverie \cite{reverie}. Hence, knowledge of $o_0,o_1,o_2,i_2$ is
insufficient, and adversaries cannot expect to distinguish prior outputs from
random with less than $2^{127}$ forward or backward queries to the permutation
(e.g.\ by guessing the inner bits) \cite{reverie}[p.~12]. Therefore, Randen is
backtracking-resistant.

\section{Performance} %
\label{sec:speed} %

\subsection{Contenders}

We emphasize that our comparison involves three groups of generators,
in increasing order of security.

\subsubsection*{Insecure generators}
To establish a performance baseline, we include the commonly used but insecure
Mersenne Twister (`MT') as implemented by the C++11 standard library. Note that
faster variants of MT exist \cite{fastMT, avxMT}. However, we advocate using
more secure generators in most applications with the exception of Monte Carlo
simulations.

\subsubsection*{Medium-strength}
Several recent generators are at least nontrivial to distinguish from random,
although indistinguishability and backtracking-resistance have not been
formally shown. We include `Philox' \cite{salmonRNG} and \verb|pcg64_c32|
\cite{pcg} (`PCG'), both of which make no concrete security claims. We also
place ISAAC into this category -- although no bias has been shown, there are
doubts about its security and similarity to RC4.

\subsubsection*{Strong}
The third group consists of generators with security claims (see
Section~\ref{sec:security}). In addition to Randen, we include `ChaCha20'
(provided by Linux 4.9 \verb|/dev/urandom| \cite{linuxChaCha}) and `CTR-DRBG'
from NIST~SP~800-90A (provided by Windows 7 \texttt{BCryptGenRandom}). These
have higher overhead, possibly due to calling into kernel mode. We reduce this
somewhat by using a 256-byte buffer, the same size as Randen. To fully exclude
the OS overhead, we also include a user-mode SSE2 implementation of ChaCha8 by
Orson Peters that uses a single 64-byte block. Note that Bernstein recommends
ChaCha20 instead due to its higher security margin \cite{djbTwitter}.

\subsection{Infrastructure}

All generators except `CTR-DRBG' are implemented in C++ and compiled using
Clang \texttt{r331746} with {\small\texttt{-O3 -std=gnu++11}}. The `x86'
benchmark is pinned to a single core of a lightly loaded dual-socket Xeon
E5-2690~v3 clocked at 2.6~GHz running Linux 4.9 with Turbo Boost and throttling
disabled. We also report performance on a POWER~8e clocked at 3.6~GHz (`PPC').
The `CTR-DRBG' measurements are obtained on an Intel i7~4790K CPU clocked at
4.0~GHz running Windows 7 x64 and using the Microsoft Visual Studio 2017
compiler. To increase the precision and accuracy of generator speed
measurements, we use an
\ifIsSubmission
improved benchmarking infrastructure.
\else
improved version of the `nanobenchmark' infrastructure
\cite{nanobenchmark} developed for HighwayHash.
\fi
It prevents elision of the generator by passing its output as
an input to an empty inline assembly block marked as modifying memory. To
reduce variability between runs, it records high-resolution timestamps (in
units of CPU cycles) from the invariant TSC, uses fences to ensure the
measured code is not reordered by the compiler nor CPU, subtracts the overhead
of the TSC reads and uses the median (for small sample counts) or mode as a
robust estimator of the central
tendency. As a result, variability between measurements (defined as median
absolute deviation from the median) is about 0.2\%. To improve
comparability between
benchmarks of different sizes, we divide the elapsed times by the number of
random bytes generated to yield cycles per byte. Note that the PPC elapsed
times are relative to its 512~MHz timebase, so we multiply measurements by
$7 \approx 3600 / 512$ to obtain CPU cycles.

\subsection{Benchmarks}

We go beyond conventional microbenchmarks by including three simple real-world
applications of random numbers: a Fisher-Yates shuffle
\cite{fisherYatesShuffle}, reservoir sampling \cite{reservoirSampling}, and a
Monte Carlo estimator for the value of $\pi$. Together, these exercise all
consumers of random bits in the C++ standard library.

We emphasize that our measurements encompass the entire application, viz.: the
algorithm consuming random numbers (e.g. shuffling) plus buffer-empty checks
required by the C++ random generator interface plus the generator itself.
Thus, the reported throughput will naturally be lower than best-case
microbenchmarks of a stream cipher or merely generating large quantities of
random bits.

\subsubsection{Microbenchmark}

C++11 only requires amortized constant-time complexity for its uniform random
generators. This allows them to return numbers from a large buffer
which is periodically refilled. To measure the amortized cost, we must ensure
the elapsed time measurements include sufficient refills. Although the buffer
sizes are known, C++11 does not provide a guaranteed means of flushing or
querying the buffer. We therefore generate 800~KB of random bits such that the
cost of `wasting' part of the final buffer is negligible.

\begin{table}[H]
\caption{Cycles per byte for a small loop, plus variability (MAD is
the median absolute deviation) and speedup factor of Randen vs.\ other
generators.}
\label{tab:loop}
\centering
\begin{tabular}{r|r|r||r|r}
Engine & x86 (MAD) & Speedup & PPC (MAD) & Speedup\\
\midrule
Randen   &  1.54 ($\pm$ 0.002) &  --  &   2.94 ($\pm$ 0.007) &  --  \\
PCG      &  0.78 ($\pm$ 0.003) &  0.5 &   1.68 ($\pm$ 0.007) &  0.6 \\
MT       &  1.79 ($\pm$ 0.001) &  1.2 &   3.99 ($\pm$ 0.014) &  1.4 \\
ChaCha8  &  3.02 ($\pm$ 0.003) &  2.0 &                    &      \\
ISAAC    &  4.08 ($\pm$ 0.006) &  2.6 &   7.91 ($\pm$ 0.014) &  2.7 \\
Philox   &  4.70 ($\pm$ 0.003) &  3.1 &   9.94 ($\pm$ 0.014) &  3.4 \\
ChaCha20 & 15.27 ($\pm$ 0.018) &  9.9 & 197.96 ($\pm$ 0.315) & 67.3 \\
CTR-DRBG & 16.80 ($\pm$ 0.009) & 11.2 &                    &      \\
\end{tabular}
\end{table}

\noindent The x86 microbenchmark (Table~\ref{tab:loop}) seems to indicate PCG
is twice as fast as Randen, which is in turn 1.2 times as fast as MT. The
trend is similar on PPC. Despite their high precision (median absolute
deviation below 0.2\%), these microbenchmark results are quite irrelevant in
practice --- which actual application repeatedly calls a random generator and
ignores the results? Any that do should use the more efficient
\texttt{discard} function instead. As we will see, these results are not
representative of real-world performance. There are at least three reasons why
microbenchmarks may mischaracterize actual performance. First, their small
working set leads to unrealistically high cache and TLB hit rates. Second,
tight loops benefit from special CPU decoding hardware
\cite{agnerMicroarch}[p.~123]. Third, simple microbenchmarks may use fewer
CPU resources (e.g.\ registers and load-store buffers) than real-world
applications.

\subsubsection{Shuffle}

For a more realistic use case, we measure a Fisher-Yates shuffle that swaps
elements at a randomly chosen position. Although the C++ standard library
provides an implementation (\verb|std::shuffle|), its mapping of random bits
to uniform integers is quite slow. Instead of costly divisions, we use a
multiplication followed by bit-shift \cite{lemireMod}. The resulting shuffle
is about three times as fast as \verb|std::shuffle|. The array is 400~KB
large, which exceeds the 256~KiB L2 cache on x86 but fits into the 512~KiB PPC
cache.

We observe different performance characteristics (Table~\ref{tab:shuffle})
than in the microbenchmark. As shown by the `Randen factor' columns, Randen is
1.2 times as fast as PCG and slightly faster than MT on x86. By contrast, MT
is the fastest on PPC. In all benchmarks, Randen is roughly twice as fast as
ISAAC, which is still faster than Philox.
Replacing ChaCha20/CTR-DRBG with Randen leads to an
overall shuffle speedup of 7 to 8, and 36 on PPC. We see nearly identical
results on x86 for 100~KB and 25~KB inputs, which fit into L2 and L1,
respectively. This implies that caching and prefetching are effective. Indeed
VTune reports that only Philox and ChaCha20 have high levels of load/store
stalls: 45\% and 85\%, versus less than 30\% for the other generators.

\begin{table}[H]
\caption{Cycles per byte for engines called from Fisher-Yates shuffle, plus
variability (MAD is the median absolute deviation) and speedup factor of
Randen vs.\ other generators.}
\label{tab:shuffle}
\centering
\begin{tabular}{r|r|r||r|r}
Engine & x86 (MAD) & Speedup & PPC (MAD) & Speedup\\
\midrule
Randen   &  2.19 ($\pm$ 0.004) & --  &   5.46 ($\pm$ 0.014) &  --  \\
PCG      &  2.65 ($\pm$ 0.005) & 1.2 &   6.65 ($\pm$ 0.014) &  1.2 \\
MT       &  2.19 ($\pm$ 0.004) & 1.0 &   4.48 ($\pm$ 0.021) &  0.8 \\
ChaCha8  &  3.63 ($\pm$ 0.006) & 1.7 &                    &      \\
ISAAC    &  4.15 ($\pm$ 0.007) & 1.9 &   8.19 ($\pm$ 0.021) &  1.5 \\
Philox   &  4.87 ($\pm$ 0.008) & 2.2 &  10.57 ($\pm$ 0.021) &  1.9 \\
ChaCha20 & 15.87 ($\pm$ 0.027) & 7.2 & 198.24 ($\pm$ 0.917) & 36.3 \\
CTR-DRBG & 20.45 ($\pm$ 0.017) & 8.2 &                    &      \\
\end{tabular}
\end{table}

\subsubsection{Sample}

Our third benchmark measures reservoir sampling, a randomized
online algorithm for retaining an 80~KB subset of a 400~KB data stream. It
probabilistically overwrites prior samples at random position. As with
shuffling, using a division-free mapping of random bits to integers is much
faster than \verb|std::uniform_int_distribution|. We see similar performance
(Table~\ref{tab:sample}), except that Randen is now 1.2 times as fast as PCG on
x86, and 1.4 on PPC. On both platforms, Randen outperforms MT.
Also as before, speeds are comparable when reducing the input sizes to
one quarter.

\begin{table}[H]
\caption{Cycles per byte for engines called from reservoir sampling, plus
variability (MAD is the median absolute deviation) and speedup factor of
Randen vs.\ other generators.}
\label{tab:sample}
\centering
\begin{tabular}{r|r|r||r|r}
Engine & x86 (MAD) & Speedup & PPC (MAD) & Speedup\\
\midrule
Randen   &  2.60 ($\pm$ 0.008) & --  &   4.97 ($\pm$ 0.007) &  --  \\
PCG      &  3.03 ($\pm$ 0.009) & 1.2 &   6.72 ($\pm$ 0.021) &  1.4 \\
MT       &  2.82 ($\pm$ 0.009) & 1.1 &   5.32 ($\pm$ 0.014) &  1.1 \\
ChaCha8  &  3.75 ($\pm$ 0.008) & 1.4 &                    &      \\
ISAAC    &  4.46 ($\pm$ 0.014) & 1.7 &   8.12 ($\pm$ 0.014) &  1.6 \\
Philox   &  4.95 ($\pm$ 0.009) & 1.9 &   9.87 ($\pm$ 0.007) &  2.0 \\
ChaCha20 & 13.46 ($\pm$ 0.017) & 5.2 & 159.67 ($\pm$ 0.168) & 32.1 \\
CTR-DRBG & 16.41 ($\pm$ 0.015) & 6.4 &                    &      \\
\end{tabular}
\end{table}

\subsubsection{Monte Carlo}

The fourth benchmark is Monte Carlo estimation of the value of $\pi$ via the
ratio of points that fall within a unit circle versus the unit square. This is
similar to the microbenchmark in that it calls the generator 200,000 times in
a fairly tight loop. Note that \verb|std::uniform_real_distribution| is slow
and not actually uniform \cite{uniformBug}, so we again implement a
replacement. It constructs an IEEE-754 mantissa using the lower 53 bits of a
generated \verb|uint64_t| and chooses an exponent based on the base-2
logarithm of its upper bit. The results in Table~\ref{tab:montecarlo} show
that PCG is 1.2 times as fast as Randen on x86 but slower on PPC. Randen
outperforms MT on both platforms.

\begin{table}[H]
\caption{Cycles per byte for engines called from a Monte Carlo simulation,
plus variability (MAD is the median absolute deviation) and speedup factor
of Randen vs.\ other generators.}
\label{tab:montecarlo}
\centering
\begin{tabular}{r|r|r||r|r}
Engine & x86 (MAD) & Speedup & PPC (MAD) & Speedup\\
\midrule
Randen   &  2.14 ($\pm$ 0.002) & --  &   3.43 ($\pm$ 0.007) &  --  \\
PCG      &  1.69 ($\pm$ 0.031) & 0.8 &   3.85 ($\pm$ 0.007) &  1.1 \\
MT       &  2.55 ($\pm$ 0.015) & 1.2 &   4.90 ($\pm$ 0.007) &  1.4 \\
ChaCha8  &  4.58 ($\pm$ 0.002) & 2.1 &                    &      \\
ISAAC    &  4.35 ($\pm$ 0.003) & 2.0 &   8.54 ($\pm$ 0.056) &  2.5 \\
Philox   &  4.97 ($\pm$ 0.002) & 2.3 &  11.62 ($\pm$ 0.014) &  3.4 \\
ChaCha20 & 16.65 ($\pm$ 0.006) & 7.8 & 194.53 ($\pm$ 8.428) & 56.7 \\
CTR-DRBG & 17.37 ($\pm$ 0.031) & 9.3 &                    &      \\
\end{tabular}
\end{table}

\subsection{Discussion}

To summarize the four benchmarks, we compute the geometric means of the
`Randen factors' from the above tables, i.e.\ the cost (cycles per byte) of
other generators divided by that of Randen (Table~\ref{tab:geomean}).
Due to the large differences in the (lack of) security guarantees of the
various generators, we discuss them separately.

\begin{table}[H]
\caption{Geometric means of Randen speedup factors across the benchmarks. A
value of 1.1 indicates the benchmarks run 1.1 times as fast after replacing MT
with Randen.}
\label{tab:geomean}
\centering
\begin{tabular}{r|r|r}
Engine & x86 & PPC\\
\midrule
PCG      & 0.9 &  1.0\\
MT       & 1.1 &  1.1\\
ChaCha8  & 1.8 &     \\
ISAAC    & 2.0 &  2.0\\
Philox   & 2.3 &  2.6\\
ChaCha20 & 7.3 & 45.9\\
CTR-DRBG & 8.5 &     \\
\end{tabular}
\end{table}

\subsubsection*{Insecure generators}
One of our main results is that the Randen generator does not increase
CPU cost relative to the commonly used but insecure Mersenne Twister
generator. The geometric mean of speed ratios indicates Randen is
slightly faster on both x86 and PPC.

\subsubsection*{Medium-strength}
Randen is about twice as fast as ISAAC and Philox in all benchmarks. Our choice
of geometric mean indicates PCG is the fastest on x86, and tied for first on
PPC.
However, this is mainly due to its result in the (unrealistic) microbenchmark.
PCG is a good choice for Monte Carlo applications but Randen is 1.2 to 1.4
times as fast for shuffling and sampling. Note that ISAAC, Philox and PCG lack
concrete security claims and have not been shown to be indistinguishable from
random nor backtracking-resistant. To the best of our knowledge, Randen is the
fastest software generator with these properties.

\subsubsection*{Strong}
Is it feasible to use cryptographically secure generators as the default
even in non-cryptographic applications? This depends on the scale of usage.
We profiled production code running company-wide and found that traditional
non-cryptographic random
generators account for tens of thousands of CPU cores. From this and the
above benchmarks, we conclude it would be too expensive to use an OS-provided
ChaCha20 (\verb|/dev/urandom|) or CTR-DRBG (\texttt{BCryptGenRandom}) as
general-purpose generators. By contrast, Randen is 5 to 10 times as fast in
real-world benchmarks. Switching from Mersenne Twister to Randen would
actually reduce cost (according to the geometric mean of our benchmarks), while
greatly increasing the baseline security of non-cryptographic randomized
applications.

\section{Conclusion}

Recent random generators have desirable characteristics: SIMD-accelerated
Mersenne Twister (MT) is efficient \cite{avxMT}. PCG has good
statistical properties \cite{pcg}. AES-CTR is
unpredictable by attackers. AES-CTR-DRBG ensures backtracking resistance
\cite{nistRequirements}. Thanks to recent hardware acceleration of AES, a
single generator can now achieve all these goals!

This work proposes \textbf{Randen}, an instantiation of Reverie \cite{reverie}
with a permutation based on a generalized Feistel structure. We show that it
is `strong', i.e.\ computationally indistinguishable from random and
backtracking resistant. This high level of security is useful even for
general-purpose applications such as shuffling and sampling because it greatly
increases the attacker cost of triggering worst-case behavior in randomized
algorithms. Note that Randen is not intended for cryptographic applications such
as key generation, but the permutation may also be useful for wide-block ciphers
and hashing functions.
Despite its statistical quality and resistance to attacks, Randen is actually
faster than the commonly used MT generator, ChaCha8, ISAAC, Philox and a
variant of PCG in some real-world benchmarks on Haswell and POWER 8.

We invite external analysis and verification of Randen's properties and
suggest it as a safer alternative to arguably obsolete \cite{pcg}[p.~6]
algorithms such as small linear congruential generators, linear feedback shift
registers, well equidistributed long-period linear \cite{well}, unaugmented
XorShift, and MT.

{
\small
\bibliographystyle{unsrtnat}
\bibliography{references}{}
}

\appendix
\section{Active Functions in 16-branch Feistel}
\label{sec:search}

\begin{lemma}
A type-2 generalized Feistel network with 16 branches and an improved block
shuffle \cite{improvedFeistel} has at least as many differentially active
functions as listed in Table~\ref{tab:active}.
\end{lemma}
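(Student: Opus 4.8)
The plan is to reduce the claim to a finite shortest-path computation over truncated differential patterns. I would abstract each $128$-bit branch difference to a single bit recording whether it is zero or nonzero, so that the differential state of the network is a pattern in $\{0,1\}^{16}$ and the number of active functions contributed by one round is the Hamming weight of its even positions. One round acts on a pattern in two stages. The \texttt{RoundFunctions} stage fixes the even branches and updates each odd branch $2j+1$ according to its even neighbour $2j$: an inactive neighbour leaves it unchanged; an active neighbour over an inactive odd branch makes it active (the $F$-output difference is nonzero); and an active neighbour over an already active odd branch may leave it active or, through cancellation against the nonzero $F$-output, clear it. The \texttt{BlockShuffle} stage then applies the fixed permutation from the \texttt{shuffle} array. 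Together these make one round a nondeterministic map on the $2^{16}$ patterns.

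Next I would argue soundness and set up the search. Every genuine characteristic projects to a truncated trail under the zero/nonzero map, and since the branching rule above permits every cancellation that some choice of $F$-output could realise, the truncated trails form a superset of these projections; hence the minimum active-function count over truncated trails is at most the minimum over real characteristics and is therefore a valid lower bound, which is exactly what the table records. I would then compute this minimum by dynamic programming: initialise a cost vector $D_0$ to zero on every nonzero pattern, and relax round by round via $D_{t+1}(P') = \min_{P \to P'}\bigl(D_t(P) + \mathrm{wt}_{\mathrm{even}}(P)\bigr)$, reading off $\min_P D_r(P)$ as the bound for $r$ rounds. Each pattern has at most $2^{8}$ successors (one binary choice per active-active pair) and only the single current vector of $2^{16}$ entries need be retained, so the computation is both fast and memory-efficient -- which is what lets the search reach $24$ rounds where heavier trail-enumeration or MILP formulations stall near $18$.

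The main obstacle I anticipate is controlling the looseness introduced by the permissive cancellation rule rather than the raw computation. Allowing a cancellation at every active-active pair keeps the bound sound but could in principle admit truncated trails that no single consistent assignment of concrete branch differences realises, which would weaken the reported numbers. I would resolve this by checking that the computed values coincide with previously published active-function counts on the overlapping round range and remain consistent with the impossible-differential ($14$-round) and sliced-biclique ($15$-round) coverage quoted earlier; were a discrepancy to appear I would refine the state to track equivalence classes of provably equal branch differences, admitting a cancellation only within a class, at the cost of a larger state space that is nonetheless still enumerable round by round.
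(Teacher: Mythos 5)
Your proposal is correct: the truncated-differential abstraction (one activity bit per branch), the nondeterministic cancellation rule for active--active pairs, and the explicit soundness argument (the truncated trails over-approximate the projections of real characteristics, so minimizing over them yields a valid lower bound) all match what the paper does in Appendix~\ref{sec:search}. Where you genuinely diverge is the computation. The paper first runs a deterministic XOR-based simulation that it explicitly flags as \emph{not} a lower bound (it arbitrarily resolves every active--active pair to ``inactive,'' so it may overshoot the true minimum), and then obtains the real bound by brute-force enumeration of all cancellation choices --- a branch-and-bound trail search with bit-packed states, pruning against the best minimum found so far, and a fixed-size priority queue; the paper reports tracing about a trillion combinations for 18 rounds on 24 cores. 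You instead observe that the per-round cost and the successor relation depend only on the current $2^{16}$-valued pattern, so the minimum satisfies a Bellman recurrence and can be computed by relaxing a single cost vector of $2^{16}$ entries round by round, at roughly $2^{16}\cdot 2^{8}$ work per round. This is a cleaner and substantially cheaper route to the same numbers (the zero pattern is unreachable from any nonzero pattern since the even branches survive the round unchanged, so excluding it at initialization suffices), and it would make the 24-round computation essentially instantaneous rather than a multi-core job. Your closing worry about looseness is about tightness, not soundness, and is not needed for the lemma as stated; the one thing your write-up must still do to match the paper's claim is actually instantiate the shuffle from \cite{improvedFeistel} (No.~10) and report the resulting table entries, since the lemma is ultimately a computational assertion about those specific values.
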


\noindent
To the best of our knowledge, these bounds are new. Note that the 6-round
bound is the same as reported for a type-2 network with four branches
\cite{doubleSP}. We will establish our bounds via exhaustive enumeration.
Type-2 Feistel networks update their odd branches by XORing them with the
result of a function of the corresponding even branch:
\verb|new_odd := F(even) XOR odd|. There are two simple properties (numbered
3 and 4 \cite{boundDoubleSP}[p.~83]) regarding the propagation of differences.
First, if both \texttt{even} and \texttt{odd} are differentially inactive,
then so is \verb|new_odd|. Second, at least two of them are active if any of
the three are active. Thus, given an input configuration (i.e.\ whether each
branch is differentially active), the output is active if exactly one input is
active. The inputs are booleans, so this corresponds to simply XORing them.
Next, our simulator counts the number of active functions (i.e.\ the number of
differentially active even input branches), shuffles the outputs and passes
them as inputs to the next round. This process is repeated for every round up
to the desired limit. We consider all $2^{16}$ input configurations except the
trivial case of zero input differences. This logic is implemented by the
following Python script.

\definecolor{green}{rgb}{0,0.6,0}
\begin{lstlisting}[frame=single,numbers=none,language=Python,
commentstyle=\color{green},keywordstyle=\color{blue}]
# (Over)estimates a lower bound of differentially active
# functions in a 16-branch generalized Feistel network.

ROUNDS = range(6)
BRANCHES = 16

idx = range(BRANCHES / 2)    # indices within odd/even
bit_shifts = [2 * i for i in idx]

# Shuffle: `Improving the Generalized Feistel' No.10
shuffle_for_new_odd  = (3,6,5,1,7,4,0,2) # = even i / 2
shuffle_for_new_even = (1,2,4,3,0,5,7,6) # = odd i / 2

min_active_funcs = 99999

def XorResult(even, xor):
	# Page 83 in `Generalized Feistel networks revisited'.
	# 3) if even (input to F) and the XOR input are both
	#    zero (inactive), so is the XOR result.
	if even == 0 and xor == 0: return 0
	# 4) otherwise, at least two of the inputs/output are
	#    active => an inactive input implies active output.
	if even == 0 and xor == 1: return 1
	if even == 1 and xor == 0: return 1
	# Assume inactive => overestimate the lower bound!
	return 0

# For every combination of differentially active
# branches except all-zero (no active functions):
for bits in range(1, 1 << BRANCHES):
	# Extract bits into integers, partition into even/odd.
	even = [((bits >> bit_shifts[i]) & 1) for i in idx]
	odd = [((bits >> (bit_shifts[i] + 1)) & 1) for i in idx]

	# Total differentially active functions.
	active_funcs = 0
	for round in ROUNDS:
		# Active functions (nonzero even[]) in this round.
		active_funcs += even.count(1)

		# Shuffle(even) will later replace the current odd.
		new_odd = [even[shuffle_for_new_odd[i]] for i in idx]
		# Shuffle(F(even, odd)) replaces the current even.
		f_out = [XorResult(even[i], odd[i]) for i in idx]
		even = [f_out[shuffle_for_new_even[i]] for i in idx]
		odd = new_odd

	# Remember and report the lowest.
	min_active_funcs = min(min_active_funcs, active_funcs)

print min_active_funcs
\end{lstlisting}

Note an important limitation of this algorithm: Property~4 does not provide
any guidance when both inputs are active. The differences may cancel, or not.
Thus, this algorithm does not guarantee a lower bound, but it does indicate
such a bound is at most six. We now extend the search
to cover all these possibilities and thus obtain a lower bound.
The search can be paused and resumed from a `state' consisting of the round
number, odd/even status, and the number of active functions so far. When both
inputs are active, we enqueue new states with every possible combination of
the output. Although quick to compute for six rounds, additional rounds yield
trillions of possible combinations. We retain a brute-force approach, but add
some optimizations to make the search tractable. First, the odd and even
differentially-active status can be represented as separate bit arrays, such
that all calls to \texttt{XorResult} simplify to a single 8-bit XOR and the
shuffle reduces to an 8-bit table lookup. Second, we can prune search areas
where \verb|active_funcs| already exceeds the minimum seen so far, because
they will not influence the lower bound. Third, a fixed-size priority-queue
with bitwise operations reduces the space and time overhead to constants. The
C++ source code corresponding to this description will later be open-sourced
alongside the Randen implementation \cite{randenGithub}. It can trace about a
trillion combinations arising during 18 rounds within a few minutes on a
workstation with 24 cores.

\end{document}